\documentclass{IEEEtran}
\usepackage{cite}
\usepackage{amsmath,amssymb,amsfonts}
\usepackage{graphicx}
\usepackage{textcomp}
\usepackage{algorithm}
\usepackage{algpseudocode}
\usepackage{cases}
\usepackage{color}
\usepackage{amsfonts}
\usepackage{indentfirst}
\usepackage{slashbox}
\usepackage{caption}
\usepackage{booktabs}
\usepackage{subfigure}
\usepackage{multirow}
\usepackage{amsthm}
\newtheorem{theorem}{Theorem}
\setcounter{theorem}{0}
\newtheorem{definition}{Definition}
\newtheorem{example}{Example}
\newtheorem{remark}{Remark}

\newtheorem{construction}{Construction}

\newcommand{\tabincell}[2]{\begin{tabular}{@{}#1@{}}#2\end{tabular}}
\begin{document}
\title{New coded caching schemes from placement delivery arrays}
\author{Mingming Zhang, Minquan Cheng, Jinyu Wang, Xi Zhong and Yishan Chen 
\thanks{M. Zhang, M. Cheng, J. Wan and X. Zhong are with Guangxi Key Lab of Multi-source Information Mining $\&$ Security, Guangxi Normal University,Guilin 541004, China (e-mail: ztw$\_$07@foxmail.com,chengqinshi@hotmail.com, mathwjy@163.com, Zhong19961225@outlook.com).}
\thanks{Y. Chen is with Business College, Guilin Tourism University, Guilin 541004, China (e-mail: glcys@163.com).}
}

\maketitle
\begin{abstract}
Coded caching schemes with low subpacketization and small transmission rate are desirable in practice due to the requirement of low implementation complexity and efficiency of the transmission. Placement delivery arrays (PDA in short) can be used to generate coded caching schemes. However, many known coded caching schemes have large memory ratios. In this paper, we realize that some schemes with low subpacketization generated by PDAs do not fully use the users' caching content to create multicasting opportunities and thus propose to overcome this drawback. As an application, we obtain two new schemes with low subpacketizations, which have significantly advantages on the memory ratio and transmission rate compared with the original scheme.
\end{abstract}

\begin{IEEEkeywords}
Coded caching scheme, placement delivery array, memory ratio, subpacketization.
\end{IEEEkeywords}

\maketitle

\section{Introduction}
\IEEEPARstart{W}{ireless}  networks  have  been imposed tremendous pressure on the data transmission during the peak traffic times due to the explosive increasing mobile services, especially the video streaming. A coded caching scheme, which has been recognized as efficient solutions to reduce this tremendous pressure, was proposed in \cite{MN} and has been rapidly used to in various settings such as D2D networks \cite{JCM}, hierarchical networks \cite{KNAD}, insecure channel \cite{STCD}, among others.
\subsection{Systems model}
In a centralized $(K,M,N)$ caching system, a single server containing $N$ files with the same length connects to $K$ users over a shared link and each user has a cache memory of size $M$ files. Denote the $N$ files by $\mathcal{W}=\{W_0,\ldots,W_{N-1}\}$ and $K$ users by $\mathcal{K}=\{0,\ldots,K-1\}$. An $F$-division $(K,M,N)$ centralized coded caching scheme consists of two separated phases as follows \cite{MN}:

\noindent{\bf Placement phase:} During the off peak traffic times, each file is subdivided into $F$ equal packets, \emph{i.e.}, $W_{i}=\{W_{i,j}:0\leq j< F\}$. Then each user caches some packets (or XOR packets) of the files from the server. If the packets of all files are cached directly in the placement phase,  then this is called uncoded placement; otherwise,  we call it coded placement. Denote $\mathcal{Z}_k$ the content cached by user $k$.

\noindent{\bf Delivery phase:} During the peak traffic times, each user randomly requests one file from the files set $\mathcal{W}$ independently. The request file number is denoted by $\mathbf{d}=(d_0,d_1,\cdots,d_{K-1})$, i.e., user $k$ requests the $d_k$-th file $W_{d_k}$, where $0\leq d_k<N$ and $k\in\mathcal{K}$. Once the server received the request $\mathbf{d}$, it broadcasts a coded signal (XOR of some required packets) of size at most $S_{{\bf d}}$ packets to users such that each user is able to recover its requested file with the help of its caching contents.

In this paper, we focus on the worst-case scenierio, i.e., all the users require different files. In this case,  the transmission rate of a coded caching scheme is defined as the maximal transmission amount among all the requests in delivery phase, i.e. $
R=\max_{{\bf d}\in[N]^K}\{R_{{\bf d}}\}$. Since the implementation complexity of a coded caching scheme increases along with its subpacketization level,  it is desirable to design a scheme with the transmission rate and the subpacketization as small as possible.

\subsection{Prior work}
We focus on the above centralized coded caching schemes when $K<N$. Maddah-Ali and Niesen \cite{MN} introduced
the first deterministic $F$-division $(K,M,N)$ coded caching scheme
with $F = {K \choose KM/N}$ when $KM/N$ is an integer. Obviously, the subpacketization $F={K\choose KM/N}$ increases so rapidly as $K$ increases, which makes  this scheme  inpractical when $K$ is large. It is well known that there exists a tradeoff between the transmission rate and the subpacketization level for the fixed number of users and the memory ratio. Indeed, there are many research papers focusing on constructing the coded caching schemes with lower subpacketization levels while sacrificing transmission rates a little bit, for instances,  \cite{CBK,K,SZG,STD,SDLT,TR,YCTC,YTCC} etc.

In \cite{SDLT}, Shanmugam et al. disovered that all the deterministic $F$-division coded caching schemes can be recasted  into an $F\times K$ combinatoric structure, which is called a placement delivery array (PDA). PDAs were introduced by Yan et al. \cite{YCTC} when $K\leq N$ when they showed that the Ali-Niesen scheme in \cite{MN} corresponds to a special class of PDA, which is referred to as MN PDA.  Indeed, it turns out that  PDAs are  good tools to construct coded caching schemes.   By means of PDA, Cheng et al. in \cite{CJYT} generalized the constructions of the PDAs in \cite{SZG}, \cite{YCTC} and \cite{CJTY}  and obtained  some schemes with more flexible memory size.
However, the subpacketization of these schemes increases exponentially with respect to the number of the users. Yan et al.  \cite{YTCC}  discovered  an equivalence between a PDA and a strong coloring in bipartite graph, and used the results of the strong coloring in bipartite graph proposed by Jennifer {\em et al.} in \cite{JA} to obtain a new class of PDA. From the results on optimality in \cite{JA}, one can check that the scheme in \cite{YTCC} has the smallest transmission rate among all the schemes with the same placement strategy. The revelant informtion of this scheme is listed in Table \ref{tab-known}. Clearly,  when $r$ is very small, the supbacketization of the scheme in Table \ref{tab-known} also increases exponentially with respect to  the user number. When $r$ is large, the memory ratio of this scheme approximates $1$. This fact severely limits its use in practice. There are also several others schemes with low subpacketization such as \cite{ASK,CBK,SDLT}, however,  most of them have the memory ratio close to $1$.

{\begin{table*}[http!]
\center
\caption{The scheme realized by the PDA in \cite{YTCC}\label{tab-known}}
\small{
\begin{tabular}{|c|c|c|c|c|c|}
\hline
Parameters & User Number $K$  & Caching ratio $\frac{M}{N}$
& Rate $R$   & Subpacketization $F$     \\ \hline
\tabincell{c}{$H,r,b,\lambda \in Z^{+}$, \\
$0 < r,b<H$,\\ $\lambda < \min{\{r,b\}}$,\\ $r+b-\lambda<H$}
&${H \choose r}$
& $1-\frac{{r \choose \lambda}{H-r \choose b-\lambda}}{{H \choose b}}$
& $\frac{{H \choose r+b-2\lambda}}{{H \choose b}}\min\{{H-(r+b-2\lambda)\choose \lambda},{r+b-2\lambda \choose r-\lambda}\}$
&${H \choose b}$\\ \hline
\end{tabular} }
\end{table*}}
\subsection{Contributions and organizations}
A coded caching scheme realized by a PDA has uncoded placement phase, which will be introduced in Section \ref{sec_prob}. We note that many packets cached by users  in some of these  coded caching schemes with low subpacketizations are not fully used in this phase, i.e., generating no multicasting opportunities due to the fact that some stars in the PDA are wasted. This fact will be introduced in detail in Remark \ref{remark-PDA-property} and Subsection \ref{subsec-new-scheme}. Then we  propose to adopt coded placement phase to reduce the number of packets and  the subpacketizations of the schemes realized by some well known PDAs, while keep the multicasting opportunity at each time slot unchanged. In particular, based on the well known PDAs obtained by strongly edge coloring in \cite{JA,YTCC}, we obtain two new schemes in Table \ref{tab-main}.
{\begin{table*}[http!]
\center
\caption{The new schemes with $K={H\choose r}$ for the same $H$, $r$, $b$ and $\lambda$ in Table \ref{tab-known}\label{tab-main}}
\small{
\setlength{\tabcolsep}{5.5mm}{
\begin{tabular}{|c|c|c|c|c|}
\hline
Reference &  Memory ratio $\frac{M}{N}$
& Rate $R$   & Subpacketization $F$   \\ \hline
Theorem \ref{th-main-1} & $1-\frac{{r\choose \lambda}{H-r\choose b-\lambda}}{{H\choose b}-\sum\limits_{i=0}^{\lambda-1}{r\choose i}{H-r\choose b-i}}$
& $\frac{{H\choose r+b-2\lambda}}{{H\choose b}-\sum\limits_{i=0}^{\lambda-1}{r\choose i}{H-r\choose b-i}}{H-(r+b-2\lambda)\choose \lambda}$
&${H\choose b}-\sum\limits_{i=0}^{\lambda-1}{r\choose i}{H-r\choose b-i}$\\[0.3cm]  \hline
Theorem \ref{th-main-2}&$1-\frac{{r\choose \lambda}{H-r\choose b-\lambda}}{{H\choose b}-\sum\limits_{i=\lambda+1}^{r}{r\choose i}{H-r\choose b-i}}$&$\frac{{H\choose r+b-2\lambda}}{{H\choose b}-\sum\limits_{i=\lambda+1}^{r}{r\choose i}{H-r\choose b-i}}{r+b-2\lambda\choose r-\lambda }$
&${H\choose b}-\sum\limits_{i=\lambda+1}^{r}{r\choose i}{H-r\choose b-i}$\\[0.3cm] \hline
\end{tabular} }}
\end{table*}}

From Table \ref{tab-main}, we can see that our new schemes have smaller memory ratio than the scheme in Table \ref{tab-known}. In Subsection \ref{sec-Panalyses}, we  show that our new schemes have smaller transmission rate than the scheme in Table \ref{tab-known}, under the same memory ratio. This is   due to the fact that all the packets of the requested files in our new schemes are fully used.

The rest of this paper is organized as follows. The relationship between a PDA and a coded caching scheme is explained  in Section \ref{sec_prob}. In Section \ref{sec-NSCs} we introduce our research motivation and present the main  method. As an application, we construct two new schemes in Section \ref{sec-strong}. Finally, we conclude the paper in Section \ref{conclusion}.

\section{Coded caching schemes realized by PDAs}\label{sec_prob}
In this paper, we use the following notations unless otherwise stated. We use bold capital letters and curlicue letters to denote arrays and sets respectively. For any positive integers $m$ and $t$ with $t< m$, let $[0,m)=\{0,1,\ldots,m-1\}$ and ${[0,m)\choose t}=\{\mathcal{T}\ |\   \mathcal{T}\subseteq [0,m), |\mathcal{T}|=t\}$, i.e., ${[0,m)\choose t}$ is the collection of all $t$-sized subsets of $[0,m)$.

Yan et al., in \cite{YCTC} proposed an interesting and simple combinatorial structure, called  a placement delivery array, which can be used to generate a coded caching scheme.
\begin{definition}(\cite{YCTC})
\label{def-PDA}
For  positive integers $K,F, Z$ and $S$, an $F\times K$ array  $\mathbf{P}=(p_{j,k})$, $0\leq j< F, 0\leq k< K$, composed of a specific symbol $``*"$  and $S$ integers in $[0,S)$, is called a $(K,F,Z,S)$ placement delivery array (PDA) if it satisfies the following conditions:

\noindent C$1$. The symbol $``*"$ appears $Z$ times in each column;

\noindent C2. Each integer in $[0,S)$ occurs at least once in the array;

\noindent C$3$. For any two distinct entries $p_{j_1,k_1}$ and $p_{j_2,k_2}$, $p_{j_1,k_1}=p_{j_2,k_2}=s$ is an integer only if $j_1\ne j_2$, $k_1\ne k_2$ and $p_{j_1,k_2}=p_{j_2,k_1}=*$.
\end{definition}
\begin{theorem}(\cite{YCTC})
\label{th-Fundamental}Using Algorithm \ref{alg:PDA}, an $F$-division $(K,M,N)$ coded caching scheme with $\frac{M}{N}=\frac{Z}{F}$ and transmission rate $R=\frac{S}{F}$ can be realized by a $(K,F,Z,S)$ PDA.
\end{theorem}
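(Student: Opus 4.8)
The plan is to follow the placement and delivery rules encoded in Algorithm~\ref{alg:PDA} and to verify the three claimed quantities ($M/N$, $R$, and decodability) directly from the three PDA axioms C1--C3. First I would specify the placement: split each file as $W_i=\{W_{i,j}:0\le j<F\}$ and let user $k$ store exactly those packets whose row index carries a star in column $k$, that is $\mathcal{Z}_k=\{W_{i,j}:p_{j,k}=*,\ 0\le i<N\}$. By axiom C1 the symbol $``*"$ appears $Z$ times in every column, so user $k$ caches $Z$ out of the $F$ packets of each of the $N$ files; counting gives $M=ZN/F$ and hence the memory ratio $M/N=Z/F$.

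Next I would describe the delivery. For a request vector $\mathbf{d}$ and each integer $s\in[0,S)$, the server broadcasts the single coded packet
\begin{equation*}
X_s=\bigoplus_{\,p_{j,k}=s} W_{d_k,j}.
\end{equation*}
By axiom C2 every integer in $[0,S)$ occurs, so there are exactly $S$ such transmissions, each of the size of one packet, i.e. $1/F$ of a file; therefore $R=S/F$. A preliminary observation I would record here is that C3 forces each integer $s$ to occur at most once in any given column and at most once in any given row, so each user contributes at most one packet to $X_s$ and the XOR is well defined.

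The hard part will be proving decodability, which is where C3 does the real work. Fix a user $k$ and a packet $W_{d_k,j}$ it does not already cache; then $p_{j,k}=s$ is an integer, so $W_{d_k,j}$ is one of the summands of $X_s$. I would argue that every other summand $W_{d_{k'},j'}$ of $X_s$ is already available to user $k$: since $p_{j',k'}=s=p_{j,k}$ are two distinct entries, C3 gives $j'\ne j$, $k'\ne k$ and, crucially, $p_{j',k}=*$. By the placement rule $p_{j',k}=*$ means user $k$ has cached $W_{i,j'}$ for every file index $i$, in particular $W_{d_{k'},j'}$. Hence user $k$ can XOR out every interfering term from $X_s$ and recover $W_{d_k,j}$. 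Running this over all integer entries in column $k$ shows that user $k$ obtains all $F$ packets of $W_{d_k}$, which completes the correctness argument. The whole proof is thus a matter of carefully bookkeeping which packets are cached versus transmitted; the only genuinely delicate point is invoking C3 to guarantee that all interference appearing in $X_s$ already lies in the cache of the decoding user.
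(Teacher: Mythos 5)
Your proof is correct and follows exactly the argument the paper relies on: the paper itself cites \cite{YCTC} for this theorem and only sketches the reasoning in Remark \ref{remark-PDA-property}, but that sketch is precisely your argument --- C1 gives the memory ratio, C2 gives the $S$ transmissions of size $1/F$ each, and C3 guarantees that every interfering summand $W_{d_{k'},j'}$ in $X_s$ satisfies $p_{j',k}=*$ and hence is already cached by user $k$. Nothing is missing; the bookkeeping at the end (the $F-Z$ decoded packets together with the $Z$ cached ones yield all $F$ packets of $W_{d_k}$) is the intended conclusion.
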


\begin{algorithm}[htb]
\caption{caching scheme based on PDA in \cite{YCTC}}\label{alg:PDA}
\begin{algorithmic}[1]
\Procedure {Placement}{$\mathbf{P}$, $\mathcal{W}$}
\State Split each file $W_i\in\mathcal{W}$ into $F$ packets, i.e., $W_{n}=\{W_{n,j}\ |\ j\in [0,F)\}$.
\For{$k\in \mathcal{K}$}
\State $\mathcal{Z}_k\leftarrow\{W_{n,j}\ |\ p_{j,k}=*, \forall~n\in [0,N)\}$
\EndFor
\EndProcedure
\Procedure{Delivery}{$\mathbf{P}, \mathcal{W},{\bf d}$}
\For{$s=0,1,\cdots,S-1$}
\State  Server sends $\bigoplus_{p_{j,k}=s,j\in [0,F), k\in[0,K)}W_{d_{k},j}$.
\EndFor
\EndProcedure
\end{algorithmic}
\end{algorithm}

\begin{example}
\label{E-pda}
It is easy to verify that the following array is a $(6,6,2,12)$ PDA.
\begin{eqnarray}\label{eq-exam-Alg}
\mathbf{P}=\left(\begin{array}{cccccccccc}
*	&	0	&	1	&	2	&	3	&	*	\\
0	&	*	&	4	&	5	&	*	&	6	\\
1	&	4	&	*	&	*	&	7	&	8	\\
2	&	5	&	*	&	*	&	9	&	10	\\
3	&	*	&	7	&	9	&	*	&	11	\\
*	&	6	&	8	&	10	&	11	&	*	
\end{array}\right)
\end{eqnarray}
Using Algorithm \ref{alg:PDA}, one can obtain a $6$-division $(6,2,6)$ coded caching scheme in the following way.

\noindent\textbf{Placement Phase}: rom Line 2 we have $W_n=\{W_{n,0},W_{n,1},W_{n,2},W_{n,3},W_{n,4},W_{n,5}\}$, $n\in [0,6)$. Then by Lines 3-5, the contents cached by users are $\mathcal{Z}_0=\mathcal{Z}_5=$ $\{W_{n,0},W_{n,5}\ |\ n\in[0,6)\}$, $\mathcal{Z}_1=\mathcal{Z}_4=\{W_{n,1},W_{n,4}\ |\ n\in[0,6)\}$ and $\mathcal{Z}_2=\mathcal{Z}_3=\{W_{n,2},W_{n,3}\ |\ n\in[0,6)\}$.

\noindent\textbf{Delivery Phase}: Assume that the request vector is $\mathbf{d}=(0,1,2,3,4,5)$. By Lines 8-10, the server sends the following coded signals at times slots $0-11$.
\begin{eqnarray*}\begin{array}{c|c|c} \hline
0: W_{0,1}\oplus W_{1,0}&  4: W_{1,2}\oplus W_{2,1}& 8: W_{2,5}\oplus W_{5,2}\\ \hline
1: W_{0,2}\oplus W_{2,0}&  5: W_{1,3}\oplus W_{3,1}& 9: W_{3,4}\oplus W_{4,3}\\ \hline
2: W_{0,3}\oplus W_{3,0}&  6: W_{1,5}\oplus W_{5,1}& 10: W_{3,5}\oplus W_{5,3}\\ \hline
3: W_{0,4}\oplus W_{4,0}&  7: W_{2,4}\oplus W_{4,2}& 11: W_{4,5}\oplus W_{5,4}\\ \hline
\end{array}
\end{eqnarray*}
\end{example}

From Algorithm \ref{alg:PDA} and Example \ref{E-pda}, the properties C1 and C2 imply that all the users have the same memory size and the server must send a coded signal at each time slot respectively. Furthermore we have the following observations.
\begin{remark}
\label{remark-PDA-property}In a $(K,F,Z,S)$ PDA $\mathbf{P}$, each column represents one user's caching contents, i.e., if $p_{j,k}=*$, then user $k$ has cached the $j$-th packet of all the files in the server. If $p_{j,k}=s$ is an integer, it means that the $j$-th packets of all the files are not stored by user $k$. Then the server finds out the corresponding rows that contain $s$, say $j$, and includes the required packets labeled by such $j$s to the delivery data at time slot $s$. The property C3 of the PDA guarantees that each user can get the requested packet. The occurrence number of $s$ is called the coded gain at time slot $s$. Clearly we prefer to design a scheme with the coded gain as large as possible at each time slot $s$.
\end{remark}
\section{New schemes from PDAs}
\label{sec-NSCs}
In this section, we  observe that there are some wasted packets cached by users in the coded caching schemes with low supbacketizations realized by some special PDAs and Algorithm \ref{alg:PDA}. Then we construct new optimized schemes obtained by these PDAs.
\subsection{Research motivations}
\label{subsec-new-scheme}
Let us consider the properties of a PDA again. Let $\mathbf{P}$ be a $(K,F,Z,S)$ PDA. For any integer $s\in [0,S)$, assume that the occurrence number of $s$ is $r_s$, say $p_{j_u,k_u}=s$, $0\leq u< r_s$, $0\le j_u< F$ and $0\le k_u<K$. Consider the subarray formed by rows $j_0,\cdots,j_{r_s-1}$ and columns $k_0,\cdots,k_{r_s-1}$, which is of order  $r_s\times r_s$ since $j_{u}\ne j_{v}$ and $k_u\ne k_v$ for all $0\le u\ne v< r_s$ from the definition of a PDA. Furthermore, we have $p_{j_u,k_v}=*$ for all $0\le u\ne v< r_s$. This subarray is equivalent to the following $r_s\times r_s$ array
\begin{eqnarray}\label{Eqn_Matrix_1_1}
\mathbf{P}^{(s)}=\left(\begin{array}{ccc}
      s &  \cdots & *\\
      \vdots  &\ddots & \vdots\\
      * & \cdots & s
    \end{array}\right)
\end{eqnarray}
with respect to row/column permutation. For a star entry $p_{j,k}=*$, we call it useful if it occurs in $\mathbf{P}^{(s)}$ for some integer $s\in[0,S)$, otherwise we call it useless. From Line 9 in Algorithm \ref{alg:PDA} and the subarray in \eqref{Eqn_Matrix_1_1}, the following statement holds. If $p_{j,k}=*$ is useful, the XOR of the requested packets (indicated by $s$) containing some $W_{n,j}$ with $n\neq d_k$ is transmitted by the server at time slot $s$, which generates the coded gain. If $p_{j,k}=*$ is useless, the $j$-th packet of all the files cached by user $k$ is never transmitted by the server, i.e., all the packets $W_{n,j}$ $(n\in[0,N))$ cached by user $k$ do not generate any coded gain.

Intuitively, the coded gain reduces when the subpacketization $F$ is reduced or  the number of useless stars in each row of a PDA is increased.  Take the PDA in \eqref{eq-exam-Alg} obtained by strongly edge coloring in \cite{JA,YTCC} for an example. It is easy to check that the stars at $p_{j,5-j}$ ($j\in[0,6)$) of $\mathbf{P}$ are useless. However, the authors showed that for the fixed edges, the number of different coloring is minimal in \cite{JA}. This implies that for the fixed placement strategy indicated by the star entries in \eqref{eq-exam-Alg}, the number of different integers in \eqref{eq-exam-Alg} is minimal, i.e., the transmission rate of the scheme realized by \eqref{eq-exam-Alg} and Algorithm \ref{alg:PDA} is minimal under the placement strategy indicated by \eqref{eq-exam-Alg}.

In fact, given a PDA, if there exist some useless stars, we can delete these useless stars and then further reduce the subpacketization and the memory ratio without reducing the coded gain at each time slot. Now let us see the array $\mathbf{P}$ in \eqref{eq-exam-Alg} again. First we delete all the useless stars and obtain the following array.
\begin{eqnarray}\label{eq-exam-Deleting}
\mathbf{P}'=\left(\begin{array}{c|c|c|c|c|c}
*	&	0	&	1	&	2	&	3	&		\\ \hline
0	&	*	&	4	&	5	&		&	6	\\ \hline
1	&	4	&	*	&		&	7	&	8	\\ \hline
2	&	5	&		&	*	&	9	&	10	\\ \hline
3	&		&	7	&	9	&	*	&	11	\\ \hline
	&	6	&	8	&	10	&	11	&	*	
\end{array}\right)
\end{eqnarray}

According to $\mathbf{P}'$ in \eqref{eq-exam-Deleting}, we can modify the placement phase as follows. Each file $W_n$, $n\in[0,6)$, is divided into $5$ packets, say $(W_{n,0},W_{n,1},\ldots,W_{n,4})$. Let $W_{n,5}=\sum^{4}_{j=0}W_{n,j}$. Using the caching strategy in Lines 3-5 in Algorithm \ref{alg:PDA}, each user $k$ caches $\mathcal{Z}_k=\{W_{n,j}\ |\ p'_{j,k}=*, \ j\in [0,6),n\in[0,6)\}$. For instance, user $0$ caches $\mathcal{Z}_0=\{W_{n,0}\ |\ n\in[0,6)\}$. Clearly the memory ratio of each user is $\frac{1}{5}$, which is smaller than the memory ratio $\frac{2}{6}$ in Example \ref{E-pda}. Then for any request vector ${\bf d}$, using the delivery phase in Algorithm \ref{alg:PDA}, each user can decode its requested file since each file $W_n$ can be recovered by any $5$ packets out of $\{W_{n,j}|j\in[0,6)\}$. Clearly, the coded gain at each time slot is the same as that of the original scheme in Example \ref{E-pda}.
\subsection{New schemes}
\label{sec-New scheme}
Followng the proposal  in Subsection \ref{subsec-new-scheme}, the following result can be obtained.
\begin{theorem}
\label{th-main}
For any $(K,F,Z,S)$ PDA $\mathbf{P}$, if there exist $Z'$ useless stars in each column, then we can obtain an $(F-Z')$-division $(K,M,N)$ coded caching scheme with $\frac{M}{N}=\frac{Z-Z'}{F-Z'}$ and transmission rate $R=\frac{S}{F-Z'}$, in which the coded gain at each time slot is the same as the original scheme realized by $\mathbf{P}$ and Algorithm \ref{alg:PDA}.
\end{theorem}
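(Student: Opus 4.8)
**

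The plan is to generalize the concrete construction already carried out for the PDA in \eqref{eq-exam-Deleting} to an arbitrary PDA with $Z'$ useless stars per column. The key structural observation driving the proof is that deleting a useless star never destroys any of the subarrays $\mathbf{P}^{(s)}$ in \eqref{Eqn_Matrix_1_1}: by definition a useless star does not occur in any such subarray, so removing it leaves every coded-gain pattern intact. First I would establish that the number of useless stars per column is well-defined and uniform, namely $Z'$ in every column; the statement hypothesizes this uniformity, so I would simply delete, from each column, exactly the $Z'$ rows indexed by its useless stars and reinterpret the result as a new array $\mathbf{P}'$ of size $(F-Z') \times K$.

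Next I would specify the modified placement. Each file $W_n$ is split into $F-Z'$ packets $W_{n,0},\ldots,W_{n,F-Z'-1}$, and then one introduces $Z'$ additional coded packets as linear combinations of these (in the worked example, a single parity packet $W_{n,5}=\sum_{j=0}^{4} W_{n,j}$), so that any $F-Z'$ packets out of the full collection of $F$ labelled packets suffice to reconstruct $W_n$. This is exactly an MDS (maximum-distance-separable) coding condition: I would invoke the existence of an $(F,F-Z')$ MDS code over a sufficiently large field to guarantee that any $F-Z'$ of the $F$ packets recover the file. User $k$ then caches $W_{n,j}$ for every $j$ with $p'_{j,k}=*$; since each column of $\mathbf{P}'$ contains $Z-Z'$ stars, the memory ratio becomes $\frac{Z-Z'}{F-Z'}$ and the subpacketization is $F-Z'$, matching the claimed parameters.

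For the delivery phase I would run Algorithm \ref{alg:PDA} verbatim on $\mathbf{P}'$. Because every integer $s\in[0,S)$ still appears in $\mathbf{P}'$ with the same occurrence pattern (its subarray $\mathbf{P}^{(s)}$ is untouched), the server transmits the same $S$ coded signals, each being an XOR with the same coded gain as before; hence $R=\frac{S}{F-Z'}$ and the coded gain per time slot is preserved. The correctness argument is the crux: I must show each user $k$ still decodes $W_{d_k}$. By property C3 applied to $\mathbf{P}'$, for every integer entry $p'_{j,k}=s$ the user can cancel all interfering terms in signal $s$ (the relevant cross entries are stars, hence cached), thereby recovering the packet $W_{d_k,j}$. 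Collecting these recovered packets together with the packets user $k$ already caches, I would count that user $k$ obtains at least $F-Z'$ distinct packets of $W_{d_k}$, which by the MDS property reconstruct the whole file.

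The main obstacle I anticipate is the decoding count, i.e., verifying that each user really ends up with $F-Z'$ packets of its requested file rather than fewer. The subtlety is that a useless star at position $(j,k)$ was deleted, so user $k$ no longer caches packet $j$; one must confirm that the union of its remaining cached packets (the $Z-Z'$ stars of column $k$ in $\mathbf{P}'$) and the packets it decodes from the delivery signals (the integer entries of column $k$) exhausts all $F-Z'$ surviving rows of that column. Since every surviving entry of a column of $\mathbf{P}'$ is either a star or a decodable integer, this union is automatically all $F-Z'$ rows, and the MDS property then finishes the recovery; making this counting precise, together with a clean specification of the MDS field size, is the part that needs care.
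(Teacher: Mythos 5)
Your proposal is correct and follows essentially the same route as the paper's proof: delete the useless star entries, encode each file's $F-Z'$ packets into $F$ labelled packets via an $(F,F-Z')$ MDS code, cache according to the surviving stars, run the delivery of Algorithm~\ref{alg:PDA} unchanged, and observe that each user ends up with $F-Z'$ coded packets (cached stars plus decoded integer entries) from which the MDS property recovers the file. One small caution: since the useless stars of different columns generally lie in different rows, the pruned array cannot literally be reshaped into an $(F-Z')\times K$ array — it must be kept as an $F\times K$ array with $Z'$ blanks per column, which is in fact what the rest of your argument (with $F$ labelled packets indexed by the original rows) already assumes.
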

\begin{proof}
Assume that $\mathbf{P}$ is a $(K,F,Z,S)$ PDA where each column has $Z'$ useless stars. Deleting the $Z'$ useless stars in each column, we obtain a new array $\mathbf{P}'=(p'_{j,k})$, $j\in [0,F)$, $0\in [0,K)$. Clearly each column of $\mathbf{P}'$ has $Z'$ blanks, $Z-Z'$ stars and $F-Z$ integers.

Based on $\mathbf{P}'$, we modify the placement strategy in Algorithm \ref{alg:PDA} as follows: The server divides each file into $F-Z'$ equal-sized packets and then encodes them using an $(F,F-Z')$ maximum distance separable (MDS) code in an appropriate operation field \cite{Lint}. Denote the resulting encoded packets by $W_{n,0}$, $W_{n,1}$, $\ldots$, $W_{n,F-1}$ for each file $W_{n}$, $n\in[0,N)$. Using the caching strategy in Lines 3-5 in Algorithm \ref{alg:PDA}, each user $k$ caches $\mathcal{Z}_k=\{W_{n,j}\ |\  p'_{j,k}=*, j\in [0,F),n\in[0,N)\}$. Clearly the memory ratio of each user is $\frac{M}{N}=\frac{Z-Z'}{F-Z'}$.

In the delivery phase, we also use the delivery strategy in Algorithm \ref{alg:PDA} as follows: For any request vector ${\bf d}$, using Lines 7-11 of Algorithm \ref{alg:PDA}, each user can get exactly $F-Z'$ required coded packets by property C3 of Definition \ref{def-PDA} and Remark \ref{remark-PDA-property}. From the property of $(F,F-Z')$ MDS code, each user can recover its requested file. So the transmission rate is $R=\frac{S}{F-Z'}$. Furthermore, the coded gain at each time slot is the same as that of the original scheme realized by $\mathbf{P}$ since the occurrence number of each integer is unchanged.
\end{proof}

Given an appropriate PDA $\mathbf{P}$, using Theorem \ref{th-main}, we can obtain a new scheme which reduces the subpacketization and the memory ratio while keeping the coded gain at each time slot unchanged compared with the original scheme realized by $\mathbf{P}$. In the following we demonstrate  that there exist useless stars in some well known PDAs.
\section{New schemes based on the PDAs in \cite{YTCC}}
\label{sec-strong}
In this section, we take the well known PDAs constructed in \cite{JA,YTCC} to show the advantages of Theorem \ref{th-main}.

\subsection{Constructions from \cite{CLZW,JA,YTCC}}
The authors showed that the schemes based on the strong coloring in bipartite graph proposed by Jennifer {\em et al.} in \cite{JA} are equivalent to some special PDAs in \cite{YTCC}. For the readers' convenience, we  use the construction of such PDAs in \cite{CLZW}.
\begin{construction}(\cite{JA,YTCC,CLZW})
\label{constrct1}For any positive integers $H$, $b$, $r$, $\lambda$ satisfying $0<r,b<H$, $\lambda<\min\{r,b\}$ and $r+b\leq H+\lambda$ define $\mathcal{F}={[0,H)\choose b}$, $\mathcal{K}={[0,H)\choose r}$ and $\mathcal{I}={[0,H)\choose \lambda}$. Then we can obtain an $({H\choose r},{H\choose b},{H\choose b}-{r\choose \lambda}{H-r\choose b-\lambda},{H\choose r+b-2\lambda}{H-r-b+2\lambda\choose \lambda})$ PDA $\mathbf{P}=(p_{B,A})_{B\in \mathcal{F}, A\in \mathcal{K}}$ and an $( \ {H\choose r},{H\choose b},{H\choose b}-{r\choose \lambda}{H-r\choose b-\lambda},{H\choose r+b-2\lambda}{r+b-2\lambda\choose r-\lambda} \ )     $  PDA $\mathbf{P}'=(p'_{B,A})_{B\in \mathcal{F}, A\in \mathcal{K}}$
where
\begin{eqnarray}
\label{eq-rule1}
p_{B,A}=\left\{
\begin{array}{cc}
((A\cup B)-I,I)& \hbox{if} \ A\cap B=I\in \mathcal{I}\\
*&\hbox{Otherwise}
\end{array}
\right.
\end{eqnarray}
\begin{eqnarray}
\label{eq-rule2}
p'_{B,A}=\left\{
\begin{array}{cc}
\label{eq-rule2}
((A\cup B)-I,A-B)& \hbox{if} \ A\cap B=I\in \mathcal{I}\\
*&\hbox{Otherwise}
\end{array}
\right.
\end{eqnarray}
\end{construction}
\begin{example} When $H=4$, $r=2$, $b=2$, $\lambda=1$, we have
\begin{eqnarray*}
\mathcal{F}&=&\mathcal{K}=\{\{1,2\}, \{1,3\},  \{1,4\},\{2,3\},\{2,4\},\{3,4\}\}, \\
\mathcal{I}&=&\mathcal{I}'=\{\{1\},  \{2\},\{3\},\{4\}\}.
\end{eqnarray*} The following array $\mathbf{P}$ can be obtained by \eqref{eq-rule1}.
\begin{small}
\begin{eqnarray}
\label{eq-exam-1}
\begin{array}{c|ccccccc}
& 12& 13&  14& 23& 24& 34\\ \hline
12&*	    &	(23,1)	&	(24,1)	&	(13,2)	&	(14,2)	&	*	\\
13&(23,1)	&	*	    &	(34,1)	&	(12,3)	&	*	    &	(14,3)	\\
14&(24,1)	&	(34,1)	&	*	    &	*	    &	(12,4)	&	(13,4)	\\
23&(13,2)	&	(12,3)	&	*	    &	*	    &	(34,2)	&	(24,3)	\\
24&(14,2)	&	*	    &	(12,4)	&	(34,2)	&	*	    &	(23,4)	\\
34&*	    &	(14,3)	&	(13,4)	&	(24,3)	&	(23,4)	&	*	
\end{array}
\end{eqnarray}\end{small}
Here we represent each subset as a string for short. Replacing the entries $(23,1)$, $(24,1)$, $(13,2)$, $(14,2)$, $(34,1)$, $(12,3)$, $(14,3)$, $(12,4)$, $(13,4)$, $(34,2)$, $(24,3)$, $(23,4)$ of $\mathbf{P}$ in \eqref{eq-exam-1} by $0$, $1$, $\ldots$, $11$ respectively, the array in \eqref{eq-exam-Alg} is obtained.And the following array $\mathbf{P}'$ can be obtained by \eqref{eq-rule2}.
\end{example}

\begin{remark}
\label{remark-2}
The authors showed that the number of strong coloring of the bipartite graph generated in \cite{JA} is minimal. This implies that for the fixed placement strategy according to the bipartite graph in \cite{JA}, the PDA obtained by the bipartite graph in \cite{JA} has minimal value of $S$. The authors in \cite{YTCC} also showed that the scheme realized by this type of PDAs has significant advantages on subpacketization at the cost of increasing the transmission rate compared with MN PDA.
\end{remark}

\subsection{New schemes from the PDAs in \cite{YTCC}}
In the following we will show that this type of PDAs satisfies Theorem \ref{th-main} with $Z'>1$ for some parameters and the following result can be obtained.
\begin{theorem}
\label{th-main-1}
For any positive integers $H$, $r$, $b$, $\lambda$ satisfying $0<r,b<H$, $\lambda<\min\{r,b\}$ and $r+b\leq H$, there exists an $({H\choose b}-\sum_{i=0}^{\lambda-1}{r\choose i}{H-r\choose b-i})$-division $({H\choose r}$, $M,N)$ coded caching scheme with $\frac{M}{N}=1-\frac{{r\choose \lambda}{H-r\choose b-\lambda}}{{H\choose b}-\sum_{i=0}^{\lambda-1}{r\choose i}{H-r\choose b-i}}$ and transmission rate $R=\frac{{H\choose r+b-2\lambda}}{{H\choose b}-\sum_{i=0}^{\lambda-1}{r\choose i}{H-r\choose b-i}}{H-(r+b-2\lambda)\choose \lambda}$.
\end{theorem}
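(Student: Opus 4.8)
The plan is to reduce the statement to a single combinatorial count and then invoke Theorem \ref{th-main}. Since Theorem \ref{th-main} turns any $(K,F,Z,S)$ PDA with $Z'$ useless stars per column into an $(F-Z')$-division scheme with $M/N=(Z-Z')/(F-Z')$ and $R=S/(F-Z')$, and since the PDA $\mathbf{P}$ of Construction \ref{constrct1} has $F=\binom{H}{b}$, $Z=\binom{H}{b}-\binom{r}{\lambda}\binom{H-r}{b-\lambda}$ and $S=\binom{H}{r+b-2\lambda}\binom{H-r-b+2\lambda}{\lambda}$, it suffices to prove that every column of $\mathbf{P}$ contains exactly $Z'=\sum_{i=0}^{\lambda-1}\binom{r}{i}\binom{H-r}{b-i}$ useless stars. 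Granting this, substituting into Theorem \ref{th-main} and using $\binom{H-r-b+2\lambda}{\lambda}=\binom{H-(r+b-2\lambda)}{\lambda}$ reproduces the three claimed parameters; this last substitution is purely mechanical.

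The core of the argument, and the step I expect to be the main obstacle, is an exact characterization of the useful stars of $\mathbf{P}$ in terms of the intersection size $|A\cap B|$. Recall from Remark \ref{remark-PDA-property} and \eqref{Eqn_Matrix_1_1} that a star $p_{B,A}=*$ is useful precisely when some integer occurs both in row $B$ and in column $A$. I would analyze the integer entries directly from the rule \eqref{eq-rule1}: an integer labelled $(T,I)$, with $|I|=\lambda$ and $|T|=r+b-2\lambda$, occurs in row $B$ if and only if $I\subseteq B$ and $B\setminus I\subseteq T$, and occurs in column $A$ if and only if $I\subseteq A$ and $A\setminus I\subseteq T$. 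Hence a common integer in row $B$ and column $A$ forces $I\subseteq A\cap B$, so that $|A\cap B|\geq\lambda$. Because $p_{B,A}=*$ means exactly that $|A\cap B|\neq\lambda$, any useful star must satisfy $|A\cap B|>\lambda$, and every star with $|A\cap B|<\lambda$ is useless.

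For the converse I would show that $|A\cap B|>\lambda$ already guarantees usefulness: choose any $I\subseteq A\cap B$ with $|I|=\lambda$, and extend $(A\cup B)\setminus I$, which has size $r+b-|A\cap B|-\lambda<r+b-2\lambda$, to a set $T\subseteq[0,H)\setminus I$ of size $r+b-2\lambda$. The hypothesis $r+b\leq H$ makes this extension feasible, since $r+b-2\lambda\leq H-\lambda=|[0,H)\setminus I|$; moreover $T\cap A=A\setminus I$ and $T\cap B=B\setminus I$ hold automatically, so the integer $(T,I)$ genuinely appears in both row $B$ and column $A$, and the entry $p_{B,A}=*$ is an off-diagonal star of $\mathbf{P}^{(T,I)}$. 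Thus a star is useless if and only if $|A\cap B|<\lambda$.

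It then remains to count. Fixing a column $A$ with $|A|=r$, the useless stars are the rows $B$ with $|B|=b$ and $|A\cap B|=i$ for some $i\in\{0,1,\ldots,\lambda-1\}$; the number of such $B$ is $\binom{r}{i}\binom{H-r}{b-i}$, obtained by choosing $i$ elements of $A$ and $b-i$ elements of $[0,H)\setminus A$. Summing over $i$ gives $Z'=\sum_{i=0}^{\lambda-1}\binom{r}{i}\binom{H-r}{b-i}$, a value independent of the chosen column, which is exactly what is needed to finish via Theorem \ref{th-main}.
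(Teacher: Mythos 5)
Your proposal is correct and follows essentially the same route as the paper: show that every star $p_{B,A}$ with $|A\cap B|<\lambda$ is useless because a common integer $(T,I)$ in row $B$ and column $A$ would force $I\subseteq A\cap B$, count $\sum_{i=0}^{\lambda-1}{r\choose i}{H-r\choose b-i}$ such stars per column, and invoke Theorem \ref{th-main}. The only difference is that you additionally verify the converse (every star with $|A\cap B|>\lambda$ is useful), which the paper omits since the application of Theorem \ref{th-main} only needs the stated number of useless stars to exist in each column.
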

\begin{proof}
Let $\mathbf{P}$ be the PDA generated by  \eqref{eq-rule1}. From Theorem \ref{th-main}, we only need to count the number of useless stars in each column of $\mathbf{P}$. For any $A\in \mathcal{K}={[0,H)\choose r}$ and $B\in \mathcal{F}={[0,H)\choose b}$ satisfying $|A\cap B|<\lambda$, we have $p_{B,A}=*$ since $A\cap B \notin \mathcal{I}$. Moreover, if $|A\cap B|<\lambda$, the star $p_{B,A}$ is useless. Otherwise if $p_{B,A}=*$ is useful, which means that it occurs in a subarray $\mathbf{P}^{(C,I)}$ for some $C\in {[0,H)\choose r+b-2\lambda}$ and $I\in {[0,H)\choose \lambda}$. Then there must exist two subsets, say $A'\in \mathcal{K}\setminus\{A\}$ and $B'\in \mathcal{F}\setminus\{B\}$, such that
\begin{eqnarray*}
p_{B',A}&=&((A\cup B')-(A\cap B'),A\cap B')=p_{B,A'} \\
 &=&((A'\cup B)-(A'\cap B),A'\cap B)=(C,I).
\end{eqnarray*}
Then we have $A\cap B'=A'\cap B=I$. So we have $I\subseteq A\cap B$, which contradicts with $|A\cap B|<\lambda$ since $|I|=\lambda$.

Since for each $A\in \mathcal{K}$ there are exactly $\sum_{i=0}^{\lambda-1}{r\choose i}{H-r\choose b-i}$ subsets $B\in \mathcal{F}$ satisfying $|A\cap B|<\lambda$. So each column has $Z'=\sum_{i=0}^{\lambda-1}{r\choose i}{H-r\choose b-i}$ useless stars. Based on the first PDA $\mathbf{P}$ in Construction \ref{constrct1} and using Theorem \ref{th-main}, our statement holds.
\end{proof}
\begin{theorem}
\label{th-main-2}
For any positive integers $H$, $r$, $b$, $\lambda$ satisfying $0<r,b<H$, $\lambda<\min\{r,b\}$ and $r+b\leq H+\lambda$, there exists an $({H\choose b}-\sum_{i=\lambda+1}^{r}{r\choose i}{H-r\choose b-i})$-division $({H\choose r}$, $M,N)$ coded caching scheme with $\frac{M}{N}=1-\frac{{r\choose \lambda}{H-r\choose b-\lambda}}{{H\choose b}-\sum_{i=\lambda+1}^{r}{r\choose i}{H-r\choose b-i}}$ and transmission rate $R=\frac{{H\choose r+b-2\lambda}}{{H\choose b}-\sum_{i=\lambda+1}^{r}{r\choose i}{H-r\choose b-i}}{r+b-2\lambda\choose r-\lambda }$.
\end{theorem}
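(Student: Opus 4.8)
The plan is to mirror the proof of Theorem~\ref{th-main-1} but work with the \emph{second} PDA $\mathbf{P}'=(p'_{B,A})$ generated by~\eqref{eq-rule2} instead of $\mathbf{P}$, and to identify a different family of useless stars. By Theorem~\ref{th-main}, once I know that each column of $\mathbf{P}'$ contains exactly $Z'=\sum_{i=\lambda+1}^{r}{r\choose i}{H-r\choose b-i}$ useless stars, the claimed subpacketization $F-Z'={H\choose b}-Z'$, memory ratio $\frac{Z-Z'}{F-Z'}$, and rate $\frac{S}{F-Z'}$ follow directly, since $\mathbf{P}'$ has $F={H\choose b}$, $Z={H\choose b}-{r\choose \lambda}{H-r\choose b-\lambda}$, and $S={H\choose r+b-2\lambda}{r+b-2\lambda\choose r-\lambda}$. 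So the whole burden is the combinatorial count of useless stars.

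First I would fix $A\in\mathcal{K}={[0,H)\choose r}$ and characterize which star entries in column $A$ are useless. A star occurs at $(B,A)$ precisely when $A\cap B\notin\mathcal{I}$, i.e. when $|A\cap B|\neq\lambda$; this splits into the cases $|A\cap B|<\lambda$ and $|A\cap B|>\lambda$. The key step is to show that, for the PDA $\mathbf{P}'$, the stars with $|A\cap B|>\lambda$ are exactly the useless ones, which is the complementary regime to the $|A\cap B|<\lambda$ stars that were useless for $\mathbf{P}$ in Theorem~\ref{th-main-1}. I expect this asymmetry to arise because the two PDAs differ only in the second coordinate of each integer entry: $\mathbf{P}$ records $I=A\cap B$ while $\mathbf{P}'$ records $A-B$, and the cardinality $|A-B|=r-|A\cap B|$ behaves oppositely as $|A\cap B|$ moves away from $\lambda$.

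The main obstacle, exactly as in Theorem~\ref{th-main-1}, is the ``usefulness'' direction: I must show a star at $(B,A)$ with $|A\cap B|>\lambda$ cannot appear in any completed $r_s\times r_s$ subarray $\mathbf{P}'^{(s)}$ of the form~\eqref{Eqn_Matrix_1_1}. The strategy is by contradiction: if $p'_{B,A}=*$ were useful, there would exist $A'\in\mathcal{K}\setminus\{A\}$ and $B'\in\mathcal{F}\setminus\{B\}$ with
\begin{eqnarray*}
p'_{B',A}&=&((A\cup B')-(A\cap B'),\,A-B')=p'_{B,A'}\\
&=&((A'\cup B)-(A'\cap B),\,A'-B)=s,
\end{eqnarray*}
where $s=(C,J)$ for some $C\in{[0,H)\choose r+b-2\lambda}$ and the second coordinate $J$ has size $r-\lambda$. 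Matching second coordinates forces $A-B'=A'-B=J$ with $|J|=r-\lambda$, so $|A\cap B'|=|A'\cap B|=\lambda$; matching first coordinates together with $A-B'=J$ should then pin down $B'$ (hence the contradiction) by forcing a relation that makes $|A\cap B|=\lambda$ as well, contradicting $|A\cap B|>\lambda$. I would carry out this coordinate-matching carefully, as it is the one genuinely nontrivial deduction; the rest is routine.

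Finally, to complete the count I observe that for fixed $A\in\mathcal{K}$ the number of $B\in\mathcal{F}={[0,H)\choose b}$ with $|A\cap B|=i$ equals ${r\choose i}{H-r\choose b-i}$, so summing over the useless regime $i>\lambda$ gives exactly $Z'=\sum_{i=\lambda+1}^{r}{r\choose i}{H-r\choose b-i}$ useless stars per column (the upper limit is $r$ since $|A\cap B|\le|A|=r$). Applying Theorem~\ref{th-main} to $\mathbf{P}'$ with this value of $Z'$ yields the stated parameters, and the hypothesis $r+b\le H+\lambda$ (rather than $r+b\le H$ as in Theorem~\ref{th-main-1}) is precisely what guarantees that $\mathbf{P}'$ in Construction~\ref{constrct1} is well defined.
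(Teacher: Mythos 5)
Your overall route is exactly the paper's: apply Theorem~\ref{th-main} to the second PDA $\mathbf{P}'$ of Construction~\ref{constrct1}, argue that the stars at positions $(B,A)$ with $|A\cap B|>\lambda$ are useless, and count them as $Z'=\sum_{i=\lambda+1}^{r}{r\choose i}{H-r\choose b-i}$ per column. The counting step and the reduction to Theorem~\ref{th-main} are fine. (As an aside, Theorem~\ref{th-main} only needs the \emph{existence} of $Z'$ useless stars per column, so your stronger claim that these are \emph{exactly} the useless ones is not needed, and neither you nor the paper proves it.)

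The gap is that you leave unexecuted what you yourself call ``the one genuinely nontrivial deduction'': deriving a contradiction from the hypothetical equality $p'_{B',A}=p'_{B,A'}$. Moreover, the mechanism you gesture at --- using the first coordinates to ``pin down $B'$'' and thereby force $|A\cap B|=\lambda$ --- is not how the argument closes; the first coordinates are not used at all. The paper's deduction rests solely on the second coordinates: since $p'_{B',A}$ and $p'_{B,A'}$ are integer entries, \eqref{eq-rule2} forces $|A\cap B'|=|A'\cap B|=\lambda$ and $A-B'=A'-B$. The set $A'-B$ is trivially disjoint from $B$, hence so is $A-B'$, which means $A\cap B\subseteq B'$ and therefore $A\cap B\subseteq A\cap B'$, giving $|A\cap B|\le|A\cap B'|=\lambda$ and contradicting $|A\cap B|>\lambda$. (Equivalently, as the paper phrases it: pick $x\in A\cap B$ with $x\notin A\cap B'$, which exists because $|A\cap B|>\lambda=|A\cap B'|$; then $x\in(A-B')\cap B=\emptyset$.) Until this step is supplied, the uselessness of the counted stars --- and hence the theorem --- is not established.
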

\begin{proof}
Let $\mathbf{P'}$ be the PDA generated by  \eqref{eq-rule2}. For any $A\in \mathcal{K}={[0,H)\choose r}$ and $B\in \mathcal{F}={[0,H)\choose b}$ with $|A\cap B|>\lambda$, we have  $p'_{B,A}=*$ since $A\cap B \notin \mathcal{I}$. We claim that such $p'_{B,A}=*$ is useless. Otherwise if the star $p'_{B,A}$ is useful. Then there must exist two subsets, say $A'\in \mathcal{K}\setminus\{A\}$ and $B'\in \mathcal{F}\setminus\{B\}$ satisfying $|A\cap B'|=\lambda$ and $|A'\cap B|=\lambda$, such that
\begin{eqnarray*}
p'_{B',A}&=&((A\cup B')-(A\cap B'),A- B') =p'_{B,A'}\\
  &=& ((A'\cup B) - (A' \cap B ),  A'- B).
\end{eqnarray*}

Then we have $A- B'=A'- B$. So we have $(A- B')\cap B=(A'- B)\cap B=\emptyset$. On the other hand, since $|A\cap B|>\lambda$ and $|A\cap B'|=\lambda$, there exists some $x\in [0,H)$ satisfying $x\in A\cap B$ and $x\notin A\cap B'$. Consequently we have $x\in A$, $x\in B$ and $x\notin B'$. So we have $x\in (A-B')\cap B$, which contradicts with $(A- B')\cap B=\emptyset$.

Since for each $A\in \mathcal{K}$ there are exactly $\sum_{i=\lambda+1}^{r}{r\choose i}{H-r\choose b-i}$ subsets $B\in \mathcal{F}$ satisfying $|A\cap B|>\lambda$. So each column of $\mathbf{P'}$ has $Z'=\sum_{i=\lambda+1}^{r}{r\choose i}{H-r\choose b-i}$ useless stars. Based on the second PDA $\mathbf{P'}$ in Construction \ref{constrct1} and using Theorem \ref{th-main}, our statement holds.
\end{proof}

\subsection{Performance analyses}
\label{sec-Panalyses}
In this section, we assume the parameters $H$ and $r$ are fixed.  The scheme in Table \ref{tab-known} is denoted by the original scheme, and the scheme with smaller transmission rate among the schemes in Table \ref{tab-main} (i.e. the schemes by Theorem \ref{th-main-1} and Theorem \ref{th-main-2})  is called the new scheme. Since it is hard to propose a theoretic comparison between the original scheme and the new scheme, we take $H=10$ and $r=5$ and compare their transmission rates and the memory ratios,  see Figure \ref{Fig_R}.
From Figure \ref{Fig_R}, it's easy to see that the transmission rate of the original scheme is much larger than that of the new scheme when the memory ratio is small.
\begin{figure}[!htbp]
  \centering
  \includegraphics[scale=0.3]{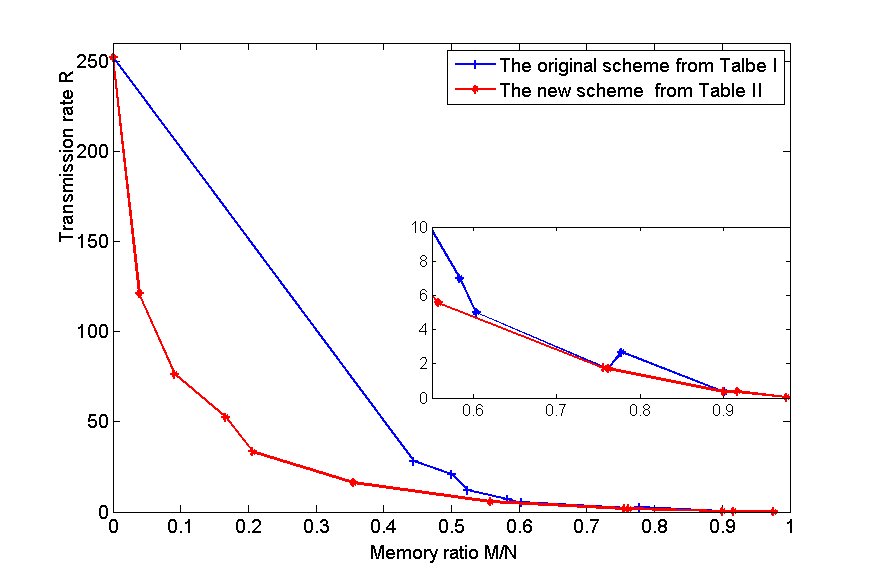}\\
  \caption{The transmission rate of the schemes from Table \ref{tab-known} and Table \ref{tab-main} where $H=10$ and $r=5$.}\label{Fig_R}
\end{figure}

\section{Conclusion}
\label{conclusion}
In this paper, we  studied coded caching schemes with low subpacketizations and small memory ratios. After we  observed that there are some packets cached by users do not generate multicasting opportunities in the delivery phase,
we modified the uncoded placement of the scheme realized by some appropriate PDAs to coded placement so that each packet cached by users can generate multicasting opportunities. Finally we  used the PDAs in \cite{YTCC} in our approach to construct two  new schemes which  has smaller memory ratio and transmission rate than that of the original scheme in \cite{YTCC}.

\ifCLASSOPTIONcaptionsoff
  \newpage
\fi



%

\end{document}